\newtheorem{theorem}{Theorem}
\newcommand{\Rmnum}[1]{\expandafter\@slowromancap\romannumeral #1@}
\definecolor{mybackground}{RGB}{204,232,207}
\begin{document}
	\title{Resource Allocation for Semantic-Aware Mobile Edge Computing Systems}

     \author{
     	\IEEEauthorblockN{Yihan Cang, Ming Chen, \emph{Member, IEEE}, Zhaohui Yang, \emph{Member, IEEE}, Yuntao Hu,\\ Yinlu Wang, Zhaoyang Zhang, \emph{Senior Member, IEEE}, and Kai-Kit Wong, \emph{Fellow, IEEE}}
     	\thanks{Y. Cang, M. Chen, Y. Hu and Y. Wang are with the National Mobile Communications Research Laboratory, Southeast University, Nanjing 210096, China (e-mails: yhcang@seu.edu.cn, chenming@seu.edu.cn,  huyuntao@seu.edu.cn, yinluwang@seu.edu.cn).  Ming Chen is also with the Purple Mountain Laboratories, Nanjing 211100, China.}
     	\thanks{Z. Yang and Z. Zhang are with College of Information Science and Electronic Engineering, Zhejiang University, Hangzhou 310027, China, and with International Joint Innovation Center, Zhejiang University, Haining 314400, China, and also with Zhejiang Provincial Key Laboratory of Info. Proc., Commun. \& Netw. (IPCAN), Hangzhou 310027, China (e-mails: yang\_zhaohui@zju.edu.cn,   ning\_ming@zju.edu.cn).}
     	\thanks{K. Wong is with the Department of Electronic and Electrical Engineering, University College London, London, UK (e-mail:
     		kai-kit.wong@ucl.ac.uk).}
     }

      \maketitle

	\begin{abstract}
    In this paper, a semantic-aware joint communication and computation resource allocation framework is proposed for  mobile edge computing (MEC) systems. 
    In the considered system, each terminal device (TD) has a computation task, which needs to be executed by offloading to the MEC server.
    To further decrease the transmission burden, each TD sends the small-size extracted semantic information of tasks to the server instead of the  large-size raw data.
    An optimization problem of joint semantic-aware division factor, communication
    and computation resource management is formulated.
    The problem aims to minimize the maximum execution delay of all TDs 
    while satisfying energy consumption constraints.
    The original non-convex problem is transformed into a convex one based on the geometric programming and the optimal solution is obtained  by the alternating optimization algorithm. Moreover, the closed-form optimal solution of the semantic extraction factor is derived.  
    Simulation results show  that the proposed algorithm yields up to $37.10\%$ delay reduction compared
    with the benchmark algorithm without semantic-aware allocation. Furthermore,  small semantic extraction factors are preferred in the case of large task sizes and poor channel conditions.   
	\end{abstract}
	
	\begin{IEEEkeywords}
		Mobile edge computing, semantic-aware, compute-then-transmit,  resource management.
	\end{IEEEkeywords}

	\section{Introduction}
Due to the rapid development of  Internet of things, the desire of numerous terminal devices for a huge amount of emerging computation services has drastically increased the traffic of core networks. 
In order to release the burden of core networks, mobile edge computing (MEC),  deploying the network functions at the network edge, provides users with nearby real-time computing services \cite{7931566}. Therefore, as a new network architecture, MEC is considered as a promising paradigm to remedy the problem of computational resources and energy shortage of mobile equipments in future wireless networks\cite{9113305}. 
In MEC networks, communication and computation resources can be jointly optimized to improve a certain system utility, such as energy consumption \cite{8264794}, delay \cite{9613307}, throughput \cite{8249785}, computation efficiency \cite{8986845,9771419}, etc.
Up to now, resource management problems for MEC systems have been widely investigated. In \cite{7553459}, the authors researched on the  energy-efficient computation offloading mechanisms for heterogeneous MEC networks, considering the energy consumption of both task computing and wireless transmission.  The work in \cite{8249785} proposed a Nash bargaining resource allocation game to achieve maximum network throughput while guaranteeing each mobile user’s minimum rate requirements.

Motivated by Shannon's classic information theory, existing works including \cite{8264794,9613307,8249785,8986845,7553459} are dedicated to research on data-oriented communications. As a novel paradigm that involves the 
meaning of messages in communication, semantic communications, which concentrate
on transmitting the meaning of source information, have revealed the significant potential to
reduce the network traffic and thus alleviate spectrum shortage\cite{9530497,10024766}. 
Different from conventional communications, semantic information needs to be extracted from raw data before transmitted in semantic-aware networks. The development of  semantic information discipline provides a foundation for semantic communications\cite{carnap1952outline,6004632}. 
Nowadays, the realization of semantic communications has been demonstrated under different types of transmitted contents such as text \cite{9398576}, image \cite{8723589}, speech \cite{9450827} transmissions, etc. These works studied the feasibility of semantic communications under different scenarios from the viewpoint of   technical levels. However, the problem of how to implement resource management in a semantic communication system also needs to be investigated so as to explore the potential of practical semantic communication networks. 
Moreover, when the network loads are heavy,  communication loads can be converted to computation amounts with the help of semantic aware technology in MEC systems. Thus a  tradeoff between communications and computations can be achieved, improving the quality of service (QoS) of terminal devices.

Motivated by above observations, we attempt to investigate a novel resource management framework, which flexibly  orchestrates  communication and computation resources for semantic-aware MEC systems. 
The main contributions of this paper are summarized as follows:
\begin{itemize}
	\item We propose a semantic-aware  MEC system to achieve
	low execution latency. In the considered system,  each terminal device (TD) has a computation task  to be executed through offloading to the MEC server. To further decrease the transmission burden, each TD sends the small-size extracted semantic information of tasks to the server instead of the   large-size raw data.  With the help of semantic extraction, the amount of data uploading to the MEC server is  reduced, and thus lower transmission delay is achieved.
	\item To coordinate the operations on TDs and MEC server for minimizing the maximum execution delay of all users, the problem
	of joint communication and computation resource management is studied. In particular, this problem is
	constructed as an optimization framework aiming to acquire the optimal local computing rate and transmit power,
	remote computing capacity, and semantic extraction factors while satisfying the energy consumption constraints. Besides, this optimization framework can be applied to general semantic tasks. 
	\item Since the variables are highly coupled with each other, the formulated problem is non-convex which makes it intractable. To this end, a geometric programming based resource management algorithm is proposed to transform the original problem into a convex one and the optimal solution is obtained by the alternating optimization algorithm.  
 	Simulation results verify the outstanding performance of the proposed algorithm in terms of execution delay.  Compared with the benchmark algorithm without semantic-aware, the proposed algorithm can reduce up to $37.10\%$ delay reduction.
\end{itemize}

The rest of this paper is structured as follows. Section II elaborates on the system model and
problem formulation. In Section III, the optimal joint optimization algorithm is proposed.  Simulation results
are represented in Section V. Finally, Section VI draws the conclusions.

	\section{System Model and Problem Formulation}
	\subsection{System Model}

	\begin{figure}[t]
	\centering
	\includegraphics[width=0.5\textwidth]{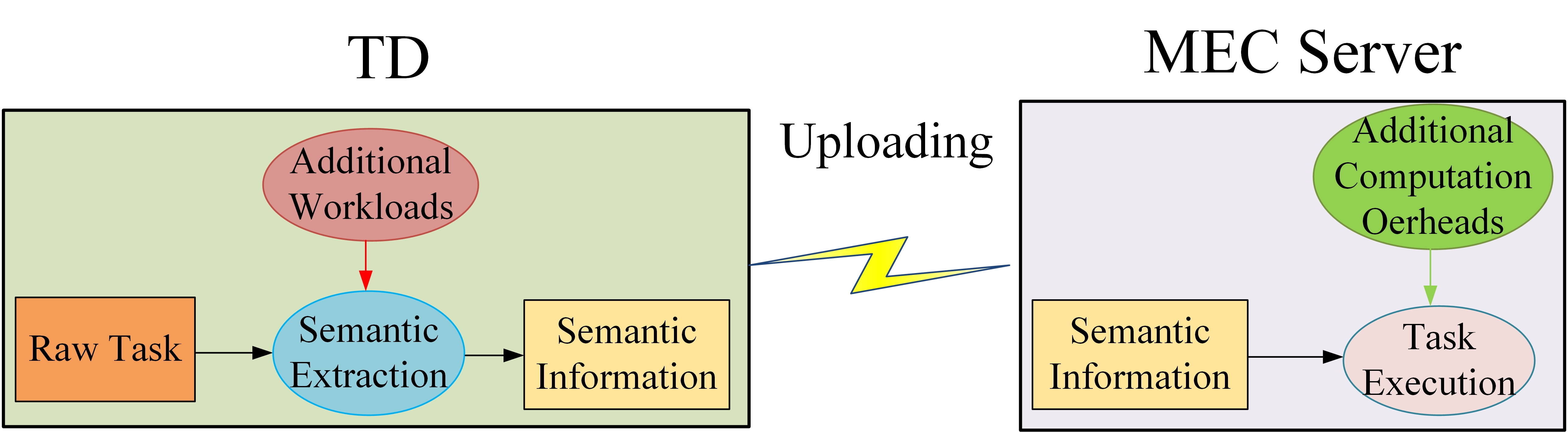}
	\caption{Flow chart of the semantic-aware MEC networks.} \label{flow chart}
\end{figure}
	Consider a semantic-aware MEC network which consists of a set $\mathcal{N}$ of $N$ TDs  and a MEC server attached to a base station (BS). Suppose that TD $n$ has a task with $A_n$ bits to be executed.  All the TDs  wirelessly access the MEC server for task offloading.  
	Assume that the MEC server is equipped with $N$ core CPUs such that offloaded tasks from different TDs can be executed in parallel.

	To further decrease the transmission burden, each TD equipped with a semantic processing unit sends the  extracted semantic information of tasks to the server instead of the raw data. With the help of semantic extraction, the amount of data uploading to the MEC server is reduced, thus releasing the traffic loads.  Meanwhile, additional workloads for semantic extraction are brought to each TD and the computation intensity of tasks gets large for the MEC server in order to process semantic information, as demonstrated in Fig.~\ref{flow chart}.  Compute-then-transmit protocol is adopted.  Specifically, TDs extract and transmit the semantic information of raw data to the server. For examples in Fig.~\ref{semantic}, we can utilize the picture cutout method for image transmission, where a pre-trained deep neural network is utilized to cut out the part of important character while discarding the irrelevant background\cite{9274895}. For text transmission, we can discard meaningless words and use abbreviations to replace the raw text without losing valid information by constructing knowledge graphs\cite{9416312}. In the considered scenario, only simple background knowledge is required to implement semantic extraction.  Hence, the pre-trained model is lightweight and universal.   Compared with task  processing,  the computation resource required for semantic extraction is much less \cite{10032275}.   
		
	\begin{figure}[t]
		\centering
		\includegraphics[width=0.48\textwidth]{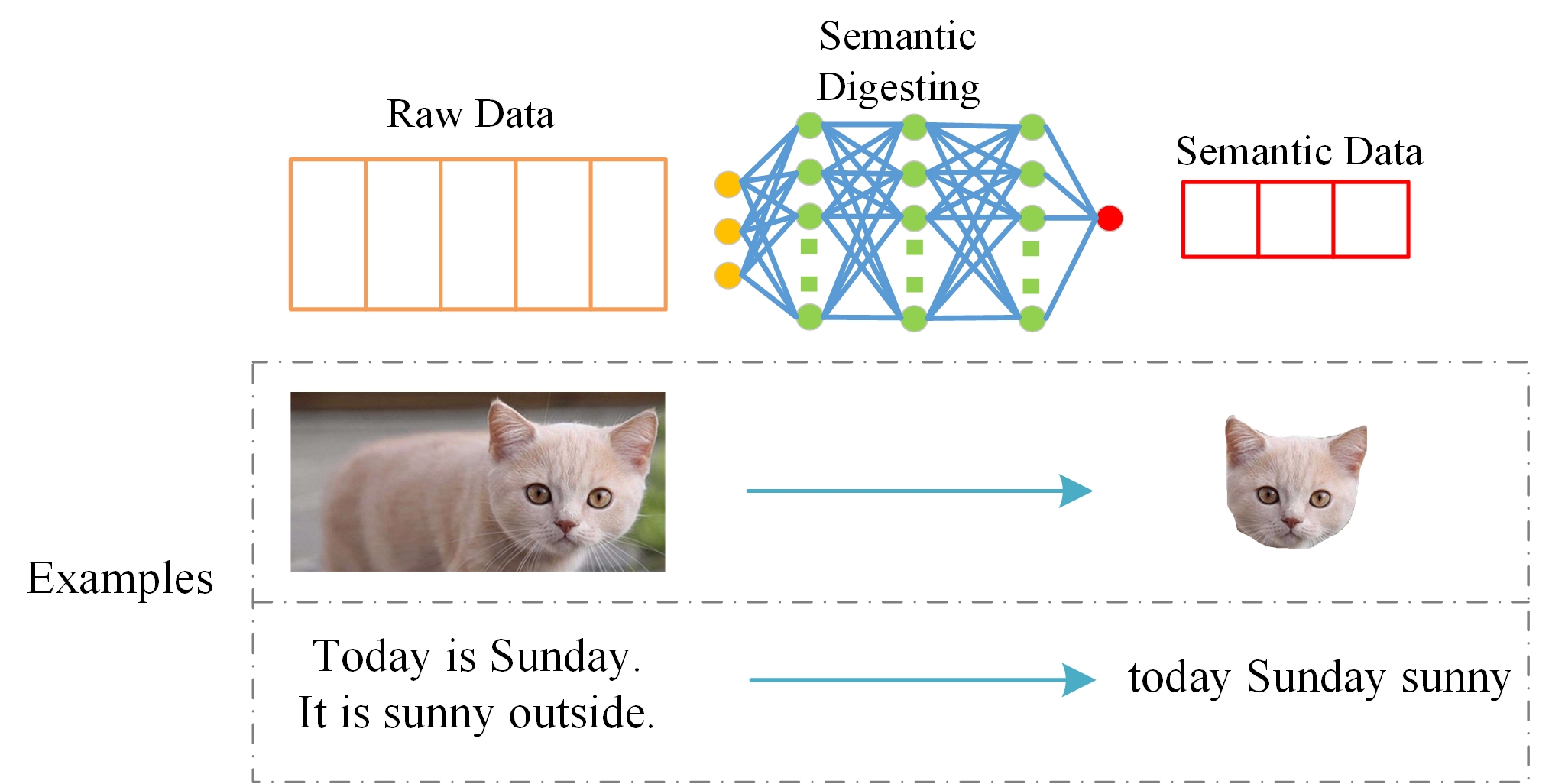}
		\caption{The schematic  diagram of semantic extraction and two examples.} \label{semantic}
	\end{figure}
	
	 \begin{figure} 
		\centering 
		\subfigure[Accuracy versus extraction factor.]{\label{}
			\includegraphics[width=0.47\linewidth]{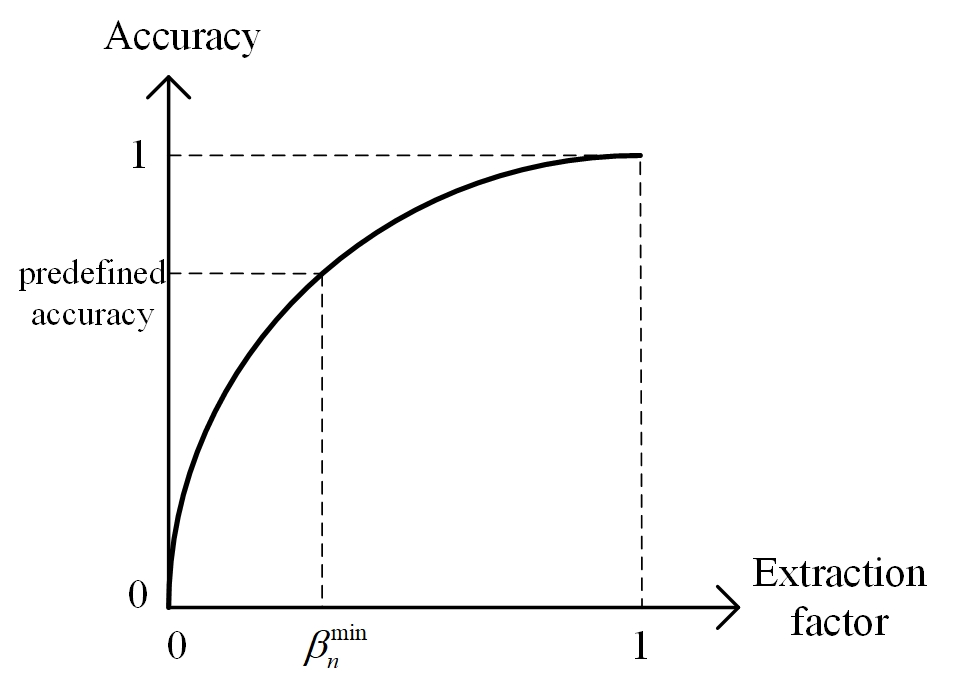}}
		\hspace{0.01\linewidth}
		\subfigure[Additional computation overhead  versus extraction factor.]{\label{}
			\includegraphics[width=0.4\linewidth]{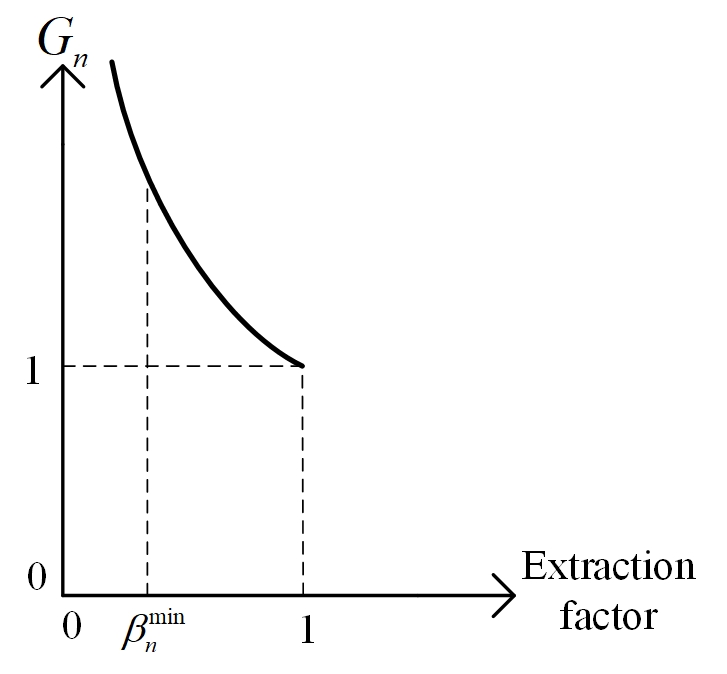}}
		\caption{Performance demonstration of semantic aware model.}
		\label{Semantic fig}
	\end{figure}

	Denote $\beta_n\in[\beta_n^{\min},1]$ as the extracting factor of the raw data to be uploaded by TD $n$, where $\beta_n^{\min}$ is the minimum extracting factor to maintain the most important information from TD $n$. As shown in Fig.~\ref{Semantic fig}(a)\cite{9398576,8723589,9274895}, the task accuracy of the computed results decreases when the extraction factor becomes small. In order to guarantee the predefined accuracy, $\beta_n$ should not be smaller than $\beta_n^{\min}$. Therefore, the amount of semantics to be uploaded for TD $n$ is equivalent to $A_n\beta_n$  and the transmission delay is given by
	\begin{align}
		T_n^T= \frac{A_n\beta_n}{R_n} ,\quad\forall n\in\mathcal{N},
	\end{align}
where $R_n$ denotes the achievable rate for TD $n$. Orthogonal frequency division multiple access (OFDMA) is adopted. Total bandwidth is  divided into $N$ sub-channels each with a bandwidth of $B$ (Hz) for each TD. Denote the channel gain between TD $n$ and MEC server by $h_{n}$, which keeps stable during the task offloading period. Therefore, $R_n$ is given as 
\begin{align} \label{R_n^U}
	R_{n}=B\log_2\left (1+\frac{h_{n} p_{n}^T }{\sigma^2}\right),\quad \forall n \in\mathcal{N}.  
\end{align}
where $\sigma^2$ denotes the additive noise power at the MEC server and $p_{n}^T$ represents the transmit power of TD $n$. 
	
	As shown in Fig.~\ref{flow chart}, at the transmitter, semantic extraction brings additional workloads. Meanwhile, at the receiver, additional computation overhead is required to process the semantic data rather than raw data. This indicates that  communication loads are converted to computation amounts through semantic transmission. 
	Denote $C_n$ as the workloads for semantic extraction in CPU cycles.  Note that $C_n$ should increase with the amount of extracted raw data $A_n$, while decreases with extraction factor $\beta_n$. Without loss of generality, we assume that\footnote{Note that the additional workload formulation for semantic extraction is provided in (3) as an example, which can be verified via simulations. Our method can also be applied to different types of workload formulation. }  
	\begin{align} \label{C_n(t)}
		C_n=\frac{a A_n}{\beta_n^k},\quad\forall n\in\mathcal{N},
	\end{align}
 	where $a >0, k> 0$ are constants relevant to the tasks. Therefore, the additional delay caused by semantic extraction in the local is expressed as
 	\begin{align}
 		T_n^L=\frac{C_n}{f_n^L}=\frac{aA_n}{\beta_n^kf_n^L},\quad\forall n\in\mathcal{N},
 	\end{align}
	where $f_n^L$ represents the local CPU frequency of TD $n$. Therefore, the local computing power consumption is given as 
	\begin{align}
		p_n^C=\kappa_n (f_n^L)^3,\quad \forall n \in\mathcal{N}, 
	\end{align}	
	where $\kappa_n$ denotes the energy coefficient of TD $n$.

	Denote $I_n$ as the computation intensity of TD $n$ in units of CPU cycles per raw data bit. Since only extracted semantic data of the raw data can be received by the MEC server, the computation intensity of the extracted semantic increases accordingly. As shown in Fig.~\ref{Semantic fig}(b), $G_n\in  [1,+\infty]$, which denotes the ratio of computation intensity of semantic data to that of raw data about TD $n$, increases monotonously with extraction factor $\beta_n$ getting small. The increase is caused by computations for processing semantic data and compensations for enhancing accuracy.  Besides, we should have  $G_n=1$ for the case that raw data are processed, i.e., $\beta_n=1$. Without loss of generality, we assume that
	\begin{align} \label{G_n}
		G_n=\frac{1}{\beta_n^p}, \quad\forall n\in\mathcal{N},
	\end{align}
	where $p>0$ is a constant relevant to specific task computation. Thus the remote semantic recovery and tasks execution delay can be given by
	\begin{align} \label{R_n^M(t)}
	  T_n^O=\frac{A_n\beta_nI_nG_n}{f_n^O},\quad \forall n\in\mathcal{N}, 
	\end{align}
	where $f_n^O$ denotes the computation resource of MEC server allocated to the TD $n$. 	Note that the precise formulation parameters of $C_n$ and $G_n$ can be obtained by fitting 
	the corresponding data points of abundant prior experiments. We will show that the proposed framework is well applicable to the general formulations of $C_n$ and $G_n$ in the simulations.
	
	\subsection{Problem Formulation}
 Due to the high transmit power of the BS, it is assumed that the
 downlink transmission time is negligible.  In this case, the total delay for TD $n$, which contains delay for semantic extraction, transmission delay,  and remote processing delay, is given as 
 \begin{align}
 	T_n=T_n^L+T_n^T+T_n^O,\quad\forall n\in\mathcal{N}. 
 \end{align}

	Let $\pmb{\Phi}=\{\pmb{f}^L,\pmb{p}^U,\pmb{f}^O, \pmb{\beta}\}$, where   $\pmb{f}^L=[f_1^L,\cdots,f_N^L]^T$ and $\pmb{f}^O=[f_1^O,\cdots,f_N^O]^T$ respectively denote local and remote computation capacity vectors, $\pmb{p}^T=[p_1^T,\cdots,p_N^T]^T$ is the  transmission power vector,   $\pmb{\beta}=[\beta_1,\cdots,\beta_N]^T$ denotes extraction factor vector of all TDs.   
	We aim at minimizing the maximum delay of all TDs under the energy consumption of each TD which consists of semantic extraction and transmission energy consumption. 
	Mathematically, the optimization problem is posed as 	
	\begin{subequations} \label{P1}
		\begin{align}
			\min_{\pmb{\Phi}}\ \   &\max_{n\in\mathcal{N}}\ T_n,\\
			\textrm{\textrm{s.t.}}\ \ 
			& p_n^CT_n^L +p_n^T T_n^T\leq E_n,  \quad \forall n\in\mathcal{N},\\
			&0\leq f_n^L\leq f_n^{\max}, \quad \forall n\in\mathcal{N},\\
			&\sum_{n=1}^N f_n^O\leq F_{MEC},\\
			&0\leq p_{n}^T\leq p_{n}^{\max}, \quad \forall n\in\mathcal{N},\\ 
			&\beta_n^{\min}\leq\beta_n\leq 1, \quad \forall n\in\mathcal{N}, 
		\end{align}
	\end{subequations}
	where $f_n^{\max}$ and $p_{n}^{\max}$ respectively denote the maximum local computation capacity and maximum transmission power of TD $n$,  $F_{MEC}$ is the maximum computation capacity of MEC server, and $E_n$ represents the maximum energy consumption of TD $n$. 
	In problem \eqref{P1}, constraint  (\ref{P1}b) reflects that the energy consumption of TD $n$ should not be larger than its predefined energy consumption threshold. Constraints (\ref{P1}c) and (\ref{P1}e) enforce the local computation capacity and transmission power to be non-negative and should not exceed the predefined budget. The remote computation capacity allocation is constrained in (\ref{P1}d);  and constraint (\ref{P1}f) specifies the semantic extraction factor limitations. Different from previous works \cite{8264794,9613307,8249785,8986845,7553459}, semantic extraction factor $\beta_n$ is considered in problem \eqref{P1}. 
	
	\section{Optimal Algorithm Design}
	\subsection{Problem Transformation}
	The objective function (\ref{P1}a) is in a max-min form, which is complicated.  To handle this
	issue, we introduce an auxiliary variable $t$. Hence, problem \eqref{P1} is transformed into the following equivalent
	problem:
		\begin{subequations} \label{P2}
		\begin{align}
			\min_{\pmb{\Phi},t}\  &t,\\
			\textrm{\textrm{s.t.}}\ 
			&T_n^L+T_n^T+T_n^O\leq t, \quad \forall n\in\mathcal{N},\\
			&\text{(\ref{P1}b)}-\text{(\ref{P1}f)}. \nonumber
		\end{align}
	\end{subequations}
Due to the intractability complex term   $p_n^TT_n^T$ in constraint (\ref{P1}b), we introduce $t_n^T$ as the transmission delay for TD $n$  and  $\pmb{t}^T=[t_1^T,\cdots,t_N^T]^T$ as the transmission delay vector. Thus, problem \eqref{P2} is reformulated as
\begin{subequations} \label{P3}
	\begin{align}
		\min_{\pmb{\Phi},\pmb{t}^T,t}\  &t,\\
		\textrm{\textrm{s.t.}}\ 
		&\frac{aA_n}{\beta_n^kf_n^L}+t_n^T+\frac{A_n\beta_nI_n}{f_n^O\beta_n^p}\leq t, \quad \forall n\in\mathcal{N},\\
		& p_n^CT_n^L +p_n^T t_n^T\leq E_n,  \quad \forall n\in\mathcal{N},\\
		&t_n^TR_n\geq\beta_nA_n,  \quad \forall n\in\mathcal{N},\\
		&\text{(\ref{P1}c)}-\text{(\ref{P1}f)},\nonumber
	\end{align}
\end{subequations}
where constraint (\ref{P3}d) implies that all the extracted semantic information should be uploaded to the MEC server. Since constraint (\ref{P3}d) is non-convex, we  introduce   $e_n^T=p_n^Tt_n^T$ as the transmission energy consumption variable for TD $n$. Therefore, (\ref{P3}d) is transformed into 
\begin{align} \label{12}
	t_n^TB\log_2(1+\frac{h_ne_n^T}{t_n^T\sigma^2})\geq\beta_nA_n,\quad\forall n\in\mathcal{N}. 
\end{align}
Since function  $f(e_n^T)=B\log_2(1+\frac{h_ne_n^T}{\sigma^2})$ is concave with respect to $e_n^T$, its perspective $t_n^Tf(e_n^T/t_n^T)$ is also concave with $(e_n^T,t_n^T)$\cite{boyd2004convex}. Therefore, constraint \eqref{12} is a convex set. Moreover, due to the coupleness among $\beta_n$, $f_n^L$, and $f_n^O$, the geometric programming (GP) algorithm can be utilized to handle this issue. In specific, denote $\beta_n=e^{\tilde{\beta}_n}$, $f_n^L=e^{\tilde{f}_n^L}$, $f_n^O=e^{\tilde{f}_n^O}, (\forall n\in\mathcal{N})$. Problem \eqref{P3} is transformed into the following problem
		\begin{subequations} \label{P4}
		\begin{align}
			&\min_{\pmb{\tilde f}^L,\pmb{e}^T,\pmb{t}^T,\pmb{\tilde f}^O, \pmb{\tilde \beta},t}\  t,\\
			\textrm{\textrm{s.t.}}\  
			&aA_ne^{-k\tilde\beta_n-\tilde f_n^L}+  t_n^T  + A_nI_ne^{(1-p)\tilde\beta_n-\tilde f_n^O}  \leq t, \forall n\in\mathcal{N},\\
			& aA_n\kappa_n e^{2\tilde f_n^L-k\tilde\beta_n} + e_n^T\leq E_n,  \quad \forall n\in\mathcal{N},\\
			&	t_n^TB\log_2(1+\frac{h_ne_n^T}{t_n^T\sigma^2})\geq e^{\tilde\beta_n}A_n,\quad\forall n\in\mathcal{N},\\
			&\tilde f_n^L\leq \ln(f_n^{\max}), \quad \forall n\in\mathcal{N},\\
			&\sum_{n=1}^N e^{\tilde f_n^O}\leq F_{MEC},\\
			&0\leq e_{n}^T\leq p_{n}^{\max}t_n^T, \quad \forall n\in\mathcal{N},\\ 
			&\ln(\beta_n^{\min})\leq\tilde\beta_n\leq 0, \quad \forall n\in\mathcal{N}, 
		\end{align}
	\end{subequations}
which is  convex. 
		\begin{theorem}
		For problem \eqref{P4}, we have the following properties:
		\begin{enumerate}
			\item[a)] Problem \eqref{P4} is equivalent to problem \eqref{P1}.
			\item[b)] For the optimal solution of problem \eqref{P4}, the equality  in (\ref{P4}b) holds for $\forall n\in\mathcal{N}$. 
			\item[c)] The equality in (\ref{P4}e) holds for the optimal solution of problem \eqref{P4}. 
		\end{enumerate}
	\end{theorem}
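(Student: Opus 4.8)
The plan is to obtain part (a) by checking that each reformulation in Section III-A preserves the optimal value and carries optimal solutions both ways, and to obtain parts (b) and (c) from short ``perturb-and-contradict'' arguments resting on the optimality of $t$.

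For part (a) I would traverse the chain $(\ref{P1})\to(\ref{P2})\to(\ref{P3})\to(\ref{P4})$ one link at a time. The step $(\ref{P1})\leftrightarrow(\ref{P2})$ is the textbook epigraph form of a min--max objective: a feasible $\pmb{\Phi}$ for (\ref{P1}) gives the pair $(\pmb{\Phi},\max_n T_n)$ for (\ref{P2}) at equal cost, while any feasible $(\pmb{\Phi},t)$ for (\ref{P2}) has $\max_n T_n\le t$. For $(\ref{P2})\leftrightarrow(\ref{P3})$, the new variable $t_n^T$ is bounded below only by $A_n\beta_n/R_n=T_n^T$ through (\ref{P3}d), and since it appears with positive coefficients in (\ref{P3}b) and (\ref{P3}c) it may be taken equal to $T_n^T$ at any optimum, which recovers (\ref{P2}). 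For $(\ref{P3})\leftrightarrow(\ref{P4})$ there are two substitutions: $e_n^T=p_n^Tt_n^T$ is a bijection once $t_n^T>0$ (forced by (\ref{P3}d) because $\beta_nA_n>0$), which turns (\ref{P3}c), (\ref{P3}d) and $0\le p_n^T\le p_n^{\max}$ into (\ref{P4}c), (\ref{P4}d), (\ref{P4}g); and $\beta_n=e^{\tilde\beta_n}$, $f_n^L=e^{\tilde f_n^L}$, $f_n^O=e^{\tilde f_n^O}$ is a bijection between $(0,\infty)$ and $\mathbb{R}$ under which every remaining constraint matches its counterpart in (\ref{P4}) term by term, using $\beta_n/\beta_n^p=e^{(1-p)\tilde\beta_n}$ in the offloading term and $p_n^CT_n^L=aA_n\kappa_n(f_n^L)^2/\beta_n^k=aA_n\kappa_n e^{2\tilde f_n^L-k\tilde\beta_n}$ in the energy term. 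Since $t$ is untouched throughout, the optimal values coincide and optimal solutions transfer; convexity of (\ref{P4}) is already established in the text via the perspective argument.

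For part (b), take an optimal solution of (\ref{P4}) and suppose (\ref{P4}b) is slack for some $n_0$, with gap $\delta_0>0$. If $N=1$ this already allows $t$ to be lowered, contradicting optimality. If $N\ge 2$, decrease $\tilde f_{n_0}^O$ by an amount small enough that the resulting increase of $A_{n_0}I_{n_0}e^{(1-p)\tilde\beta_{n_0}-\tilde f_{n_0}^O}$ stays below $\delta_0$; this frees an amount $\epsilon>0$ of server capacity, which we split among the other $N-1$ TDs by raising their $\tilde f_n^O$. Each $\tilde f_n^O$ appears only in (\ref{P4}b) and in the server-capacity constraint, so no other constraint moves and the capacity constraint remains satisfied ($\sum_n e^{\tilde f_n^O}$ is unchanged); meanwhile the offloading term in (\ref{P4}b) strictly decreases for every $n\ne n_0$, while $n_0$'s own (\ref{P4}b) still has positive slack by construction. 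Hence all $N$ constraints (\ref{P4}b) acquire strict slack, $t$ can be decreased, and we reach a contradiction; therefore (\ref{P4}b) is tight for every $n$.

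Part (c)---that at the optimum the server's allocated computation exhausts its budget, i.e., $\sum_n e^{\tilde f_n^O}=F_{MEC}$ (the equality in (\ref{P4}e))---follows from the same idea in a single move: if $\sum_n e^{\tilde f_n^O}<F_{MEC}$, raise every $\tilde f_n^O$ to consume the positive slack; the capacity constraint stays satisfied, no other constraint is affected, and the offloading term in (\ref{P4}b) strictly decreases for \emph{every} $n$, so $t$ can again be strictly reduced. I expect the real care to lie in part (a): no individual step is hard, but one must state each ``an optimal point may be assumed to meet this auxiliary constraint with equality'' claim precisely and verify that the two substitutions are honest bijections on the relevant domains; parts (b) and (c) then reduce to one short paragraph each.
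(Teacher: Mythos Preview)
Your approach is exactly what the paper sketches: part (a) by tracing the reformulation chain $(\ref{P1})\to(\ref{P2})\to(\ref{P3})\to(\ref{P4})$, and parts (b) and (c) by contradiction (the paper writes only ``b) and c) can be derived by contradiction. Detailed proof is omitted due to space limitations''). Your capacity-redistribution argument for (b) is a clean way to force simultaneous slack in all $N$ delay constraints.

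One labeling caveat worth making explicit in your write-up: in the paper's numbering, (\ref{P4}e) is the local-CPU bound $\tilde f_n^L\le\ln(f_n^{\max})$, while the server-capacity constraint $\sum_n e^{\tilde f_n^O}\le F_{MEC}$ that you prove tight is (\ref{P4}f). Your interpretation is nevertheless the intended one: the paper's own formula \eqref{optlocal} shows the local-CPU bound need \emph{not} bind (it takes the minimum of $f_n^{\max}$ and an energy-limited value), whereas Algorithm~\ref{alg1} explicitly relies on the server budget being exhausted. So the printed statement carries a reference typo, and the claim you establish is the correct one.
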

	\begin{proof}
		 a) can be obtained according to the transformation procedures from \eqref{P2} to \eqref{P4}.  b) and c) can be derived by contradiction. Detailed proof is omitted due to space limitations. 
	\end{proof}

\subsection{Optimal Algorithm}
To further decrease the  complexity of solving problem \eqref{P4} with utilizing the interior-point method, an alternating optimization algorithm is proposed. In the proposed alternating algorithm, three subproblems requires to be solved, i.e., local and remote computation rate optimization, transmission delay and energy consumption optimization, as well as semantic extraction factor optimization. 
\subsubsection{Optimal Local and Remote Computation Rate}
According to (\ref{P4}b), the optimal $t$ decreases with $\tilde f_n^L$. Since the left hand side (LHS) of (\ref{P4}c) increases with $\tilde f_n^L$, the optimal $\tilde f_n^L$ should satisfy either (\ref{P4}c) or (\ref{P4}e). Due to that $f_n^L=e^{\tilde f_n^L},\beta_n=e^{\tilde\beta_n}$ and setting (\ref{P4}c) with equality, the optimal $f_n^L$ is given by
\begin{align} \label{optlocal}
	f_n^L=\min\left\{f_n^{\max},\sqrt{\frac{\beta_n^k\left(E_n-e_n^T\right)}{aA_n\kappa_n}}\right\},\quad\forall n\in\mathcal{N}. 
\end{align}

	According to Theorem 1, setting (\ref{P4}b) with equality yields 
	\begin{align}
		f_n^O=\frac{A_nI_n\beta_n^{1-p}}{t-\frac{aA_n}{\beta_n^kf_n^L}-t_n^T}.\label{f_n^O}
	\end{align}
	Therefore, the bisection method can be adopted to find the optimal $t$ and $f_n^O$. The upper bound and lower bound of $t$ can be respectively provided as 
	\begin{align} 
		t^{\max}&=\max_{n\in\mathcal{N}}\frac{aA_n}{\beta_n^kf_n^L}+t_n^T+\frac{A_nI_n\beta_n^{1-p}N}{F_{MEC}},\label{tmax}\\
		t^{\min}&=\max_{n\in\mathcal{N}}\frac{aA_n}{\beta_n^kf_n^L}+t_n^T+\frac{A_nI_n\beta_n^{1-p}}{F_{MEC}}.\label{tmin}
	\end{align}
	The detailed steps are summarized in Algorithm~\ref{alg1}.
		\begin{algorithm}[t]    
		\algsetup{linenosize=\small}
		\small
		\caption{Optimal Local and Remote Computing Capacity  Algorithm}
		\begin{algorithmic}[1]         \label{alg1}
			\STATE {\bf Initialize:}  $t^{\max}$ and $t^{\min}$ respectively in \eqref{tmax} and \eqref{tmin}, bisection accuracy $\epsilon_1$. 
			
			\STATE Calculate the optimal local computation rate $f_n^L$ according to \eqref{optlocal}.
			\STATE  {\bf Repeat:}
			\STATE   \hspace*{0.2in}  Set $t \leftarrow \frac{t^{\max}+t^{\min}}{2}$;
			\STATE \hspace*{0.2in} {\bf For} $n=1:N$:
			\STATE \hspace*{0.4in} Calculate $f_n^O$ according to \eqref{f_n^O}.\\
			
			\STATE \hspace*{0.2in} {\bf If} $\sum_{n=1}^N f_n^O(t)- F_{MEC} < 0$:\\
			\STATE\hspace*{0.4in} $t^{\max}\leftarrow t$;\\
			\STATE\hspace*{0.2in} {\bf Else:}\\
			\STATE\hspace*{0.4in} $t^{\min}\leftarrow t$.\\
			\STATE\hspace*{0.2in} {\bf End}\\
			\STATE{\bf Until:} $t^{\max}-t^{\min}\leq\epsilon_1$. 

			\STATE  {\bf Output:} optimal $f_n^L$ and $f_n^O$.   
		\end{algorithmic}
	\end{algorithm}
	
\subsubsection{Optimal Transmission Delay and Energy Consumption}
	With fixed $\pmb{\tilde f}^L$, $\pmb{\tilde f}^O$, and $\pmb{\tilde\beta}$, problem \eqref{P4} is reduced to the following $N$ parallel subproblems: 
	  		\begin{subequations} \label{P4.1}
	  	\begin{align}
	  		\min_{e_n^T,t_n^T}\quad  &t_n^T,\\
	  		\textrm{\textrm{s.t.}}\quad 
	  		&e_n^T \leq E_n-aA_n\kappa_n(f_n^L)^2/\beta_n^k, \\
	  		&	t_n^TB\log_2(1+\frac{h_ne_n^T}{t_n^T\sigma^2})\geq \beta_nA_n,\\
	  		&0\leq e_{n}^T\leq p_{n}^{\max}t_n^T. 
	  	\end{align}
	  \end{subequations}
	Similarly, bisection method can be used to solve problem \eqref{P4.1}. Specifically, given $t_n^T$, $e_n^T$ should satisfy the following condition according to (\ref{P4.1}b) and (\ref{P4.1}d):
	\begin{align} \label{19}
		0\leq e_n^T\leq \nu_n,
	\end{align}
	where $\nu_n=\min\left\{E_n-aA_n\kappa_n(f_n^L)^2/\beta_n^k,p_{n}^{\max}t_n^T\right\}$. If $t_n^TB\log_2(1+\frac{h_n\nu_n}{t_n^T\sigma^2})\geq \beta_nA_n$, problem \eqref{P4.1} is feasible; otherwise, it is infeasible. The detailed  procedures are summarized in Algorithm~\ref{alg2}.
	\begin{algorithm}[t]    
		\algsetup{linenosize=\small}
		\small
		\caption{Optimal Transmission Delay and Energy Consumption  Algorithm}
		\begin{algorithmic}[1]         \label{alg2}
			\STATE {\bf Initialize:}  $ub\leftarrow$sufficiently large, $lb\leftarrow$0, bisection accuracy  $\epsilon_2$. 
			
			\STATE  {\bf For} $n=1:N$:
			\STATE  \hspace*{0.2in}{\bf Repeat:}
			\STATE  \hspace*{0.4in}  Set $t_n^T \leftarrow \frac{ub+lb}{2}$;
			
			\STATE \hspace*{0.4in} {\bf If} $t_n^TB\log_2(1+\frac{h_n\nu_n}{t_n^T\sigma^2})\geq \beta_nA_n$:\\
			\STATE\hspace*{0.6in} $ub\leftarrow t_n^T$;\\
			\STATE\hspace*{0.4in} {\bf Else:}\\
			\STATE\hspace*{0.6in} $lb\leftarrow t_n^T$.\\
			\STATE\hspace*{0.4in} {\bf End}\\
			\STATE \hspace*{0.2in} {\bf Until:} $ub-lb\leq\epsilon_2$. \\
			\STATE {\bf End}\\

			\STATE  {\bf Output:} the optimal $t_n^T$ and $e_n^T$.   
		\end{algorithmic}
	\end{algorithm}

\subsubsection{Optimal Semantic Extraction Factor}
With fixed $\pmb{\tilde{f}}^L,\pmb{t}^T,\pmb{e}^T,\pmb{\tilde{f}}^O$, problem \eqref{P4} is reduced to 
	\begin{subequations} \label{P5}
	\begin{align}
		\min_{\pmb\beta,t}\quad  &t,\\
		\textrm{\textrm{s.t.}}\quad 
		&\frac{aA_n}{f_n^L\beta_n^k}+  t_n^T  + \frac{A_nI_n\beta_n^{1-p}}{f_n^O}  \leq t, \quad \forall n\in\mathcal{N},\\
		& \frac{aA_n\kappa_n(f_n^L)^2}{\beta_n^k} + e_n^T\leq E_n,  \quad \forall n\in\mathcal{N},\\
		&\beta_nA_n\leq t_n^TB\log_2\left(1+\frac{h_ne_n^T}{t_n^T\sigma^2}\right), \quad \forall n\in\mathcal{N},\\
		&\beta_n^{\min}\leq\beta_n\leq 1, \quad \forall n\in\mathcal{N}. 
	\end{align}
\end{subequations}
Problem \eqref{P5} can be further  equivalent to 
	\begin{subequations} \label{P5.1}
	\begin{align} 
		\min_{\beta_n}\quad  &\frac{aA_n}{f_n^L\beta_n^k}+  t_n^T  + \frac{A_nI_n\beta_n^{1-p}}{f_n^O},\\
		\textrm{\textrm{s.t.}}\quad 
		&\eta_1\leq\beta_n\leq \eta_2, 
	\end{align}
\end{subequations}
where $\eta_1=\max\left\{\beta_n^{\min},\sqrt[k]{\frac{aA_n\kappa_n(f_n^L)^2}{E_n-e_n^T}}\right\}$ and  $\eta_2=\min\left\{1,\frac{t_n^T}{A_n}B\log_2\left(1+\frac{h_ne_n^T}{t_n^T\sigma^2}\right)\right\}$. 
We consider the following two cases for parameter $p$.  

Case 1: $p\geq 1$. In this case,  (\ref{P5.1}a) decreases with $\beta_n$. Therefore, the optimal $\beta_n=\eta_2$.

Case 2: $0<p<1$. Since (\ref{P5.1}a) is convex with respect to $\beta_n$ and the stationary point can be derived by $\mu=\sqrt[k+1-p]{\frac{akf_n^O}{f_n^LI_n(1-p)}}$. Therefore, the closed-form optimal solution is given as
\begin{align} \label{22}
	\beta_n=\left\{\begin{aligned}
		&\eta_1,\quad\text{if }\mu\leq\eta_1,\\
		&\mu,\quad\ \text{if }\eta_1<\mu\leq\eta_2,\\
		&\eta_2,\quad\text{if }\eta_2<\mu.
	\end{aligned}\right.
\end{align}

\textbf{\emph{Remark:}} According to \eqref{22}, semantic extraction factor $\beta_n$ decreases with task size $A_n$ since the data amount to be uploaded can be greatly reduced with a small $\beta_n$.  Moreover, bad channel conditions results in a small $\beta_n$ as a small $\beta_n$ can alleviate the negative impacts incurred by bad channel conditions. Furthermore, according to \eqref{19}, energy consumption decreases with $\beta_n$ decreasing. This is due to the fact that transmit power is reduced with the aid of semantic transmission. 

\subsection{Performance Analysis}
The overall optimization algorithm for semantic-aware MEC systems is summarized in Algorithm~\ref{alg3}. Since problem \eqref{P4} is convex, the optimal solution can be obtained through alternately optimizing computation resource, communication resource, and semantic extraction factor. The complexity of the proposed algorithm lies in solving three subproblems. For local and remote computation rate optimization,
the complexity of the bisection method can be estimated as $\mathcal{O}\left(3N+N\log_2(1/\epsilon_1)\right)$, where $\mathcal{O}\left(\log_2(1/\epsilon_1)\right)$  is the average searching numbers of bisection method with accuracy $\epsilon_1$. For transmission delay and energy consumption optimization, the complexity is given as $\mathcal{O}\left(N\log_2(1/\epsilon_2)\right)$, where $\epsilon_2$ is the accuracy of the bisection method. Besides, the complexity of semantic extraction factor optimization is $\mathcal{O}(N)$ since the closed-form solution is given in \eqref{22}. In summary, the proposed algorithm can be implemented with linear complexity, which is significantly reduced compared with the  interior-point method. 
	  	\begin{algorithm}[t]    
		\algsetup{linenosize=\small}
		\small
		\caption{Overall Semantic-Aware MEC Resource Allocation Algorithm}
		\begin{algorithmic}[1]         \label{alg3}
			\STATE  {\bf Initialize:} $N$, $A_n$, $\kappa_n$, $f_n^{\max}$, $p_n^{\max}$, $\beta_n^{\min}$, $F_{MEC}$, $E_n$. \\
			\STATE {\bf Repeat:} \\
			\STATE\hspace*{0.2in}With given transmission delay, energy consumption and semantic extraction,  obtain local and remote computation rate according to Algorithm~\ref{alg1}. \\
			\STATE\hspace*{0.2in}With given local, remote computation rate, and semantic extraction,  obtain transmission delay and energy consumption according to Algorithm~\ref{alg2}.\\
			\STATE\hspace*{0.2in}Update  semantic extraction factors  through solving problem \eqref{P5.1}.\\
			\STATE {\bf Until:} the objective value (\ref{P4}a) converges.\\
			\STATE  {\bf Output:} the optimal $\pmb{f}^L$, $\pmb{e}^T$, $\pmb{t}^T$, $\pmb{f}^O$, $\pmb{\beta}$, and $t$.  
		\end{algorithmic}
	\end{algorithm}

\section{Simulation Results}
	\begin{table}[t]
	\caption{Simulation parameters} 
	\label{table} 
	\centering 
	\begin{tabular}{|m{2.5cm}<{\centering}|m{2.5cm}<{\centering}|m{2.5cm}<{\centering}|}
		\hline
		\hline
		$\sigma^2=-174$ dBm/Hz&$B=1$ MHz&$I_n=70$ cycles/bit\\
		\hline
		$\kappa_n=10^{-26}$ & $A_n=3$ Mbits &$f_n^{\max}=1$ GHz\\
		\hline
		$p_n^{\max}=1$ Watts&$\beta_n^{\min}=0.6$&$F_{MEC}=13$ GHz\\
		\hline
		$a=10^{-5}$ & $k=4$ &$p=3$\\
		\hline
		$E_n=0.5$ Watts&$\epsilon_1=10^{-7}$&$\epsilon_2=10^{-7}$\\		\hline
		\hline
	\end{tabular}
\end{table}
In this section, numerical results are conducted to evaluate the 
performance of the proposed framework.  There are $N=10$ TDs with each equipped with a semantic extracting processing unit to implement semantic-aware edge computing. The distances between TDs and BS are evenly set in $[120,255]$ meters. The channel model follows in \cite{9449944}.   
The other parameters are set as in Table~\ref{table} unless otherwise mentioned.

Fig.~\ref{fig_E} illustrates the  total delay performances versus the predefined energy threshold. 
The following benchmark algorithms are provided to compare with the proposed semantic-aware MEC framework: 1) MEC without semantic, i.e., $\beta_n(t)=1$ for all TDs. This algorithm corresponds to the conventional  MEC scenario that TDs directly upload raw data to the MEC server\cite{9613307}. 2) Local Execution Algorithm, where the tasks are executed locally under the energy and power constraints. 
As can be seen in Fig.~\ref{fig_E}, maximum delay of all users decreases with energy thresholds. Moreover, the proposed framework significantly outperforms benchmark algorithms under different tasks sizes. On average, the proposed framework yields up to $37.10\%$ and $69.35\%$ delay reduction compared with MEC without semantic algorithm and local execution algorithm, respectively. This is due to the fact that semantic-aware MEC can extract key data, thus efficiently reducing transmission delay while maintaining accuracy. 

To test the framework in applications to various semantic-aware scenarios,  in Fig.~\ref{fig_beta}, we plot the performance comparisons between
different workload expressions $C_n$ and additional computation overhead expressions $G_n$ under various minimum extraction
factors. Note that the case when $\beta_n^{\min} = 1$ is
equivalent to the traditional MEC without semantic-aware. As can be seen, as the minimum
extraction factor decreases from $1.0$ to $0.5$, the total delay becomes small. This can be explained by that a smaller extraction
factor significantly reduces the amount of data to be uploaded,
thus the transmit delay is reduced. 
Furthermore, the
performance of the proposed algorithm is similar even under different
workload expressions. This shows its ability to apply to different
semantic-aware scenarios. 
Besides, larger $a$ and $k$ can result in a larger delay. This is because as $a$ and $k$ become large, more workloads for semantic extraction shall be finished at TDs according to \eqref{C_n(t)}, which incurs a larger processing delay. 
Similarly, a larger $p$ leads to a higher delay in Fig.~\ref{fig_beta}.
This is due to the fact that $G_n$ becomes larger when $p$
gets bigger, which indicates that processing a bit of semantic
data requires more CPU cycles at the MEC server. Thus a higher execution delay is needed for semantic recovery.

	\begin{figure}[t]
	\centering
	\includegraphics[width=0.5\textwidth]{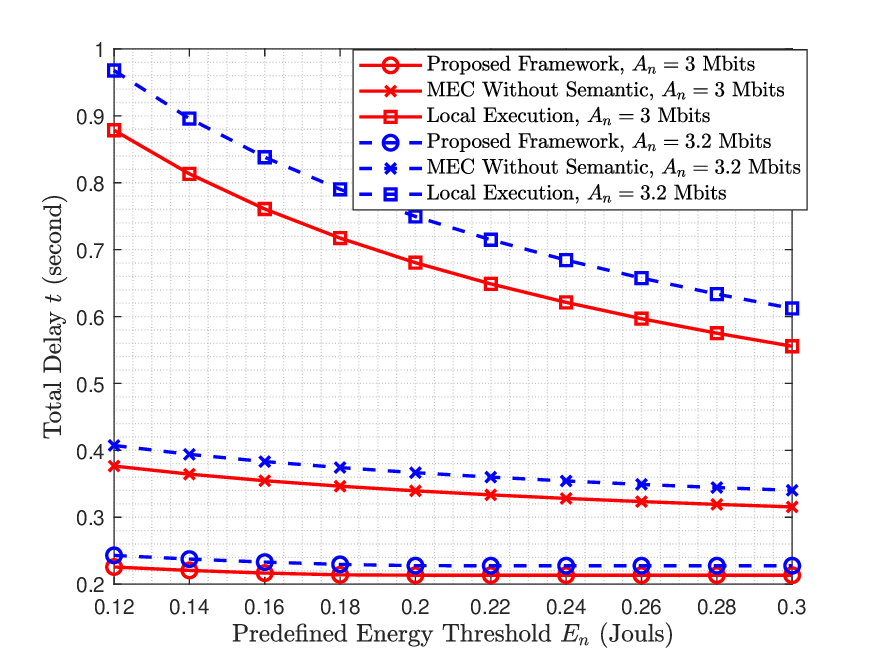}
	\caption{Performance comparisons between different algorithms under different energy thresholds and tasks sizes.} \label{fig_E}
\end{figure}
	
	\begin{figure}[t]
		\centering
		\includegraphics[width=0.5\textwidth]{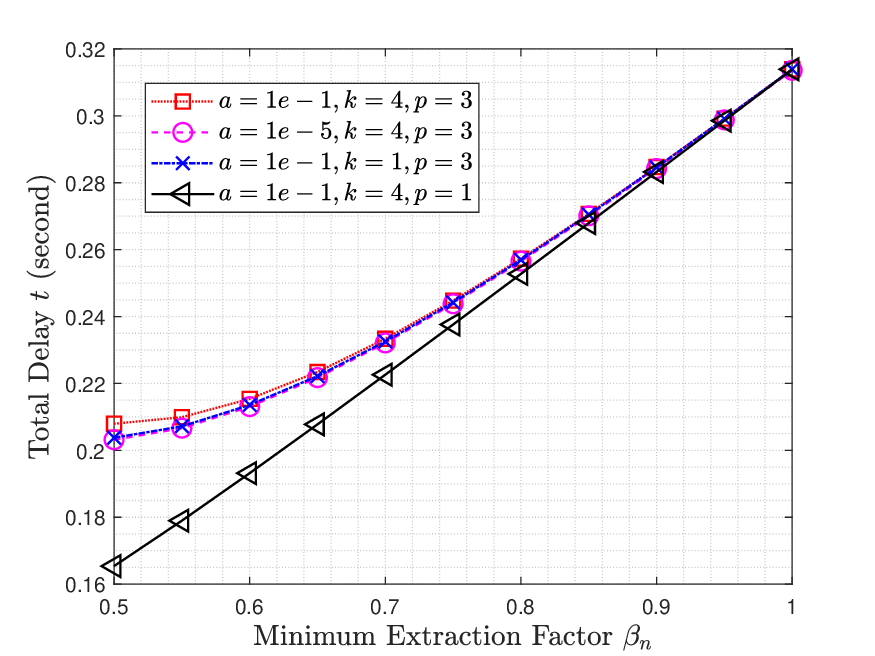}
		\caption{Performance comparisons between different $C_n$ and $G_n$ expressions under different minimum extraction factors.} \label{fig_beta}
	\end{figure}
	
\section{Conclusions}
In this paper, we have proposed a joint communication and
computation resource allocation framework for MEC systems with the aid of prevalent semantic transmission technology. An optimization problem has been formulated to optimize semantic extraction factors, communication
and computation resources to minimize the maximum delay of all TDs by taking into account energy consumption constraints. 
The original non-convex problem is transformed into a convex one based on GP, which can be efficiently solved through the proposed alternating   optimization algorithm. Simulation results demonstrate the superiority 
 of the proposed framework over the benchmark
schemes with respect to delay. Besides, the maximum delay of all users is significantly reduced as the semantic
extraction factor gets small.

	\bibliography{IEEEabrv,Ref}

\begin{thebibliography}{10}
\providecommand{\url}[1]{#1}
\csname url@samestyle\endcsname
\providecommand{\newblock}{\relax}
\providecommand{\bibinfo}[2]{#2}
\providecommand{\BIBentrySTDinterwordspacing}{\spaceskip=0pt\relax}
\providecommand{\BIBentryALTinterwordstretchfactor}{4}
\providecommand{\BIBentryALTinterwordspacing}{\spaceskip=\fontdimen2\font plus
\BIBentryALTinterwordstretchfactor\fontdimen3\font minus
  \fontdimen4\font\relax}
\providecommand{\BIBforeignlanguage}[2]{{%
\expandafter\ifx\csname l@#1\endcsname\relax
\typeout{** WARNING: IEEEtran.bst: No hyphenation pattern has been}%
\typeout{** loaded for the language `#1'. Using the pattern for}%
\typeout{** the default language instead.}%
\else
\language=\csname l@#1\endcsname
\fi
#2}}
\providecommand{\BIBdecl}{\relax}
\BIBdecl

\bibitem{7931566}
T.~Taleb, K.~Samdanis, B.~Mada, H.~Flinck, S.~Dutta, and D.~Sabella, ``{On
  Multi-Access Edge Computing: A Survey of the Emerging 5G Network Edge Cloud
  Architecture and Orchestration},'' \emph{{IEEE Commun. Surv. Tutor.}},
  vol.~19, no.~3, pp. 1657--1681, thirdquarter 2017.

\bibitem{9113305}
Q.-V. Pham, F.~Fang, V.~N. Ha, M.~J. Piran, M.~Le, L.~B. Le, W.-J. Hwang, and
  Z.~Ding, ``{A Survey of Multi-Access Edge Computing in 5G and Beyond:
  Fundamentals, Technology Integration, and State-of-the-Art},'' \emph{IEEE
  Access}, vol.~8, pp. 116\,974--117\,017, June 2020.

\bibitem{8264794}
X.~Hu, K.-K. Wong, and K.~Yang, ``{Wireless Powered Cooperation-Assisted Mobile
  Edge Computing},'' \emph{{IEEE Trans. Wirel. Commun.}}, vol.~17, no.~4, pp.
  2375--2388, Apr. 2018.

\bibitem{9613307}
Y.~Yu, Y.~Yan, S.~Li, Z.~Li, and D.~Wu, ``{Task Delay Minimization in Wireless
  Powered Mobile Edge Computing Networks: A Deep Reinforcement Learning
  Approach},'' in \emph{Proc. International Conference on Wireless
  Communications and Signal Processing (WCSP)}, Changsha, China, Oct. 2021, pp.
  1--6.

\bibitem{8249785}
Z.~Zhu, J.~Peng, X.~Gu, H.~Li, K.~Liu, Z.~Zhou, and W.~Liu, ``{Fair Resource
  Allocation for System Throughput Maximization in Mobile Edge Computing},''
  \emph{IEEE Access}, vol.~6, pp. 5332--5340, Jan. 2018.

\bibitem{8986845}
F.~Zhou and R.~Q. Hu, ``{Computation Efficiency Maximization in
  Wireless-Powered Mobile Edge Computing Networks},'' \emph{{IEEE Trans. Wirel.
  Commun.}}, vol.~19, no.~5, pp. 3170--3184, Feb. 2020.

\bibitem{9771419}
Y.~Cang, M.~Chen, J.~Zhao, T.~Gong, J.~Zhao, and Z.~Yang, ``{Fair Computation
  Efficiency for OFDMA-Based Multiaccess Edge Computing Systems},'' \emph{IEEE
  Commun. Lett.}, vol.~27, no.~3, pp. 916--920, Mar. 2023.

\bibitem{7553459}
K.~Zhang, Y.~Mao, S.~Leng, Q.~Zhao, L.~Li, X.~Peng, L.~Pan, S.~Maharjan, and
  Y.~Zhang, ``{Energy-Efficient Offloading for Mobile Edge Computing in 5G
  Heterogeneous Networks},'' \emph{IEEE Access}, vol.~4, pp. 5896--5907, Aug.
  2016.

\bibitem{9530497}
G.~Shi, Y.~Xiao, Y.~Li, and X.~Xie, ``{From Semantic Communication to
  Semantic-Aware Networking: Model, Architecture, and Open Problems},''
  \emph{{IEEE Commun. Mag.}}, vol.~59, no.~8, pp. 44--50, Aug. 2021.

\bibitem{10024766}
W.~Xu, Z.~Yang, D.~W.~K. Ng, M.~Levorato, Y.~C. Eldar, and M.~Debbah, ``{Edge
  Learning for B5G Networks With Distributed Signal Processing: Semantic
  Communication, Edge Computing, and Wireless Sensing},'' \emph{IEEE J. Sel.
  Topics Sig. Proces.}, vol.~17, no.~1, pp. 9--39, Jan. 2023.

\bibitem{carnap1952outline}
R.~Carnap, Y.~Bar-Hillel \emph{et~al.}, ``{An outline of a theory of semantic
  information},'' 1952.

\bibitem{6004632}
J.~Bao, P.~Basu, M.~Dean, C.~Partridge, A.~Swami, W.~Leland, and J.~A. Hendler,
  ``{Towards a theory of semantic communication},'' in \emph{{Proc. IEEE
  Network Science Workshop}}, West Point, NY, USA, June 2011, pp. 110--117.

\bibitem{9398576}
H.~Xie, Z.~Qin, G.~Y. Li, and B.-H. Juang, ``{Deep Learning Enabled Semantic
  Communication Systems},'' \emph{{IEEE Trans. Signal Process.}}, vol.~69, pp.
  2663--2675, Apr. 2021.

\bibitem{8723589}
E.~Bourtsoulatze, D.~Burth~Kurka, and D.~Gündüz, ``{Deep Joint Source-Channel
  Coding for Wireless Image Transmission},'' \emph{{IEEE Trans. Cogn. Commun.
  Netw.}}, vol.~5, no.~3, pp. 567--579, May 2019.

\bibitem{9450827}
Z.~Weng and Z.~Qin, ``{Semantic Communication Systems for Speech
  Transmission},'' \emph{{IEEE J. Sel. Areas Commun.}}, vol.~39, no.~8, pp.
  2434--2444, Aug. 2021.

\bibitem{9274895}
L.~Cao and S.~Lin, ``{Target Detection Algorithm of Optimized Convolutional
  Neural Network under Computer Vision},'' in \emph{Proc. International
  Conference on Unmanned Systems (ICUS)}, Harbin, China, Nov. 2020, pp.
  923--930.

\bibitem{9416312}
S.~Ji, S.~Pan, E.~Cambria, P.~Marttinen, and P.~S. Yu, ``{A Survey on Knowledge
  Graphs: Representation, Acquisition, and Applications},'' \emph{{IEEE Trans.
  Neural Netw. Learn. Syst.}}, vol.~33, no.~2, pp. 494--514, Feb. 2022.

\bibitem{10032275}
Z.~Yang, M.~Chen, Z.~Zhang, and C.~Huang, ``{Energy Efficient Semantic
  Communication Over Wireless Networks With Rate Splitting},'' \emph{IEEE J.
  Sel. Areas Commun.}, vol.~41, no.~5, pp. 1484--1495, May 2023.

\bibitem{boyd2004convex}
S.~Boyd, S.~P. Boyd, and L.~Vandenberghe, \emph{{Convex optimization}}.\hskip
  1em plus 0.5em minus 0.4em\relax Cambridge university press, 2004.

\bibitem{9449944}
S.~Bi, L.~Huang, H.~Wang, and Y.-J.~A. Zhang, ``{Lyapunov-Guided Deep
  Reinforcement Learning for Stable Online Computation Offloading in
  Mobile-Edge Computing Networks},'' \emph{{IEEE Trans. Wirel. Commun.}},
  vol.~20, no.~11, pp. 7519--7537, Nov. 2021.

\end{thebibliography}
	
\end{document}